\documentclass[11pt,draftcls,onecolumn]{IEEEtran}

\usepackage{amsfonts}
\usepackage{amssymb}
\usepackage{subfigure}
\usepackage{amsmath,dsfont}
\usepackage{amsthm}
\usepackage{mathrsfs}
\usepackage{verbatim}
\usepackage{graphicx}
\usepackage{epstopdf}
\usepackage{mathtools}
\usepackage{enumerate}
\usepackage{caption}
\usepackage{algpseudocode}
\usepackage{algorithm}
\usepackage{ifthen}
\usepackage{setspace}

\DeclareMathAlphabet{\mathpzc}{OT1}{pzc}{m}{it}

\theoremstyle{remark} \newtheorem{remark}{Remark}

\newcommand{\thmlabel}[1]{\label{thm:#1}}

\theoremstyle{remark} \newtheorem{theorem}{Theorem}

\newcommand{\Vvec}{\mathbf{V}}
\newcommand{\Xvec}{\mathbf{X}}
\newcommand{\Yvec}{\mathbf{Y}}

\newcommand{\tYvec}{\tilde{\mathbf{Y}}}

\newcommand{\mA}{\mathcal{A}}

\newcommand{\mD}{\mathcal{D}}
\newcommand{\mM}{\mathcal{M}}

\newcommand{\sfQ}{\mathsf{Q}}

\newcommand{\realSet}{\mathfrak{R}}

\newcommand{\eps}{\epsilon}
\newcommand{\heps}{\hat{\epsilon}}

\newcommand{\vphi}{\varphi}
\newcommand{\vsig}{\varsigma}

\newcommand{\E}{\mathds{E}}
\newcommand{\Ec}[1]{\mathds{E} \left\{ #1 \right\}}

\newcommand{\mN}{\mathcal{N}}

\newcommand{\sgn}{\text{sgn}}
\newcommand{\dst}{\displaystyle}

\newboolean{SingleColumn}
\setboolean{SingleColumn}{true}

	\ifthenelse{\boolean{SingleColumn}}{}{}

\newcounter{MYtempeqncnt}

%



\begin{document}

\IEEEoverridecommandlockouts

	\ifthenelse{\boolean{SingleColumn}}{}{}
	\ifthenelse{\boolean{SingleColumn}}{}{}

\author{
\IEEEauthorblockN{Yonathan Murin$^1$, Yonatan Kaspi$^2$, and Ron Dabora$^1$} \\
\authorblockA{\small $^1$Ben-Gurion University, Israel, $^2$University of California, San Diego, USA}

\vspace{-0.90cm}

}




\maketitle
\thispagestyle{plain}
\pagestyle{plain}

\begin{abstract}
	In this work, we consider linear-feedback schemes for the two-user Gaussian broadcast channel with noiseless feedback. We extend the transmission scheme of [Ozarow and Leung, 1984] by applying estimators with memory instead of the memoryless estimators used by Ozarow and Leung (OL) in their original work.
	A recursive formulation of the mean square errors achieved by the proposed estimators is provided, along with a proof for the existence of a fixed point. This enables characterizing the achievable rates of the extended scheme. 
	Finally, via numerical simulations it is shown that the extended scheme can improve upon the original OL scheme in terms of achievable rates, as well as achieve a low probability of error after a finite number of channel uses.

\end{abstract}

\vspace{-0.6cm}
\section{Introduction}

%
%
%
%

\vspace{-0.15cm}
We study the transmission of two independent messages over a two-user Gaussian broadcast channel (GBC) 
with noiseless causal feedback, referred to in the following as the GBCF, focusing on linear-feedback schemes.  
In \cite{OzarowBC:84}, Ozarow and Leung presented inner and outer bounds on the capacity region of the two-user GBCF, and showed that in some scenarios it is larger than the capacity region of the GBC. In the following, we refer to the achievability scheme presented in [1] as the {\em OL scheme}.
%
The OL scheme is a linear-feedback scheme that builds upon the scheme of Schalkwijk and Kailath (SK) \cite{SK:66}, which achieves the capacity of point-to-point (PtP) Gaussian channels with noiseless causal feedback (NCF). 
Motivated by the optimality of the SK scheme for PtP channels, the works \cite{OzarowMAC:84} and \cite{OzarowBC:84} extended this approach to the two-user Gaussian multiple-access channel with NCF (GMACF) and to the two-user GBCF, respectively.
For the GMACF this extension achieves the capacity region, however, for the GBCF this extension is generally suboptimal.
The OL scheme of \cite{OzarowBC:84} and the scheme of \cite{OzarowMAC:84} were later extended to GBCFs and to GMACFs with more than two users as well as to Gaussian interference channels with NCF (GICFs) in \cite{Kramer:02}. 
These schemes were also used in \cite{Zaidi:10} to stabilize (in the mean square sense) two linear, discrete-time, scalar and time-invariant systems in closed-loop, via control over GMACFs and GBCFs, respectively. 

Transmission over the GBCF was also studied using tools from control theory. The work \cite{Elia:04} derived a linear code for the two-user GBCF in which the noises at the receivers are independent. This code obtained higher achievable rates compared to the OL scheme. Later, \cite{AMM:12} used linear quadratic Gaussian (LQG) control to remove the restriction of independent noises in \cite{Elia:04}, and obtained a linear-feedback communications scheme for the $K$-user GBCF.
 Recently, \cite{ASW:14} showed that for the two-user GBCF with independent noises having the same variance, the scheme of \cite{AMM:12} is the optimal scheme subject to using a linear feedback, in the sense of maximal sum-rate.

The work \cite{GLSM:13} studied the GBCF and the GICF and derived a scheme whose achievable sum-rate approaches the full-cooperation bound as the signal-to-noise ratio (SNR) increases to infinity. 
Finally, in \cite{WMW:13} it was shown that the capacity region of the GBCF with independent noises and with only a common message cannot be achieved using a coding scheme which employs linear feedback. 

Note that all the works on GBCFs reviewed above focused on the achievable rates, namely, the rates are obtained as the blocklength increases to infinity. 
From this perspective, it was shown in \cite{AMM:12} that when the noises are independent, the LQG scheme of \cite{AMM:12} achieves a larger rate region than the OL scheme. 
However, in \cite{MKDG:14} we showed that when constraining the blocklength to be finite, the OL scheme can achieve lower mean squared errors (MSEs), and therefore a lower probability of error compared to the LQG scheme (we note that although the focus of \cite{MKDG:14} is on transmission of correlated sources, this observation holds also for independent messages).
In this work we propose an extension of the OL scheme which improves upon the achievable region obtained in \cite{OzarowBC:84}, and benefits from the good performance of the OL scheme when the blocklength is finite.
Next, we detail our main contributions:

{\bf {\slshape Main Contributions}:} 
We focus on linear-feedback schemes as such schemes are simple to implement. 
In the OL scheme of \cite{OzarowBC:84} the receivers' errors are estimated based {\em only} on the last channel output. However, as the transmitted signal in the OL scheme is statistically correlated with {\em all} previous channel outputs, this approach is generally suboptimal.
In this work we provide an {\em explicit} recursive formulation of the minimum MSE (MMSE) estimators which use the last {\em two} channel outputs, along with an explicit recursive characterization of the resulting achievable MSEs. We show that the proposed scheme has a fixed point, which enables characterizing its achievable rates as well as its MSE performance at any finite number of channel uses. 
We note that this is the first {\em explicit characterization} of an OL-type scheme which uses estimators with memory, and the first time that a fixed point property is proved for such a scheme. 
Furthermore, via numerical simulations we show that the extended scheme can both improve upon the original OL scheme in terms of achievable rates, and outperform the scheme of \cite{AMM:12} in terms of probability of error after a {\em finite} number of channel uses.
Finally, we demonstrate that in contrast to the common intuition, applying MMSE estimation based on several recent channel outputs may sometimes result in lower achievable rates than MMSE estimation based only on the most recent channel output.


The rest of this paper is organized as follows: The problem definition and the OL scheme are presented in Section \ref{sec:ProbForm}, the extended OL scheme is derived in Section \ref{sec:OL_extended}, and discussion along with numerical examples are given in Section \ref{sec:examples}.

{\bf {\slshape Notations}:} We use upper-case letters to denote random variables, e.g., $X$, boldface letters to denote random column vectors, e.g., $\Xvec$, and calligraphic letters to denote sets, e.g., $\mM$. 
We use $\Ec{\cdot},(\cdot)^T$ and $\realSet$ to denote the expectation, transpose, and the set of real numbers, respectively.
Lastly, $\sgn(x)$ denotes the sign of $x$, with $\sgn(0) \triangleq 1$.

\vspace{-0.3cm}
\section{Problem Definition and Previous Reslts} \label{sec:ProbForm}

\subsection{Problem Definition} \label{subsec:ProbForm}
\vspace{-0.1cm}
We consider communications over the GBCF, depicted in Fig. \ref{fig:GBC}.
All signals are real. The encoder obtains a pair of independent messages $M_1 \in \mM_1$ and $M_2 \in \mM_2$, where each message is uniformly distributed over its message set. The encoder is required to send the message $M_i, i=1,2$, to the $i$'th receiver, Rx$_i$, using $n$ channel uses. The channel outputs at each receiver at time $k, k=1,2,\dots,n$, are given by:
\vspace{-0.2cm}
\begin{align}
	Y_{i,k} & = X_k + Z_{i,k}, \quad i=1,2, \label{eq:signalModel}
	%
\end{align}

\vspace{-0.15cm}
\noindent where the noises $Z_{i,k} \sim \mN(0, \sigma_{i}^2)$ are i.i.d over time $k$, and independent of $(M_1,M_1)$. Let $\E \{Z_1 Z_2 \} = \rho_z \sigma_{1} \sigma_{2}$.

{\em A $(\mathrm{R}_1, \mathrm{R}_2, n)$ code} for the GBCF consists of
\begin{enumerate}
	\item 
		Two discrete message sets $\mM_i \mspace{-4mu} \triangleq \mspace{-4mu} \{1, 2, \dots, 2^{n\mathrm{R}_i}\}, i=1,2$.
	\item
		An encoder which maps the observed message pair, $M_i \in \mM_i$, and the received NCF up to time $k$, into a channel input at time $k$ via $X_k = f_k(M_1,M_2,\Yvec_{1,1}^{k-1},\Yvec_{2,1}^{k-1})$. 
		%
	\item
		Two decoders $g_{i}: \realSet^{n} \mapsto \mM_i$, each uses its $n$ channel outputs ,$\Yvec_{i,1}^n$, to estimate $M_i$: $\hat{M}_{i} = g_{i}(\Yvec_{i,1}^n)$.
\end{enumerate}

\begin{figure}[t]
    \centering
    \includegraphics[width=0.75\linewidth]{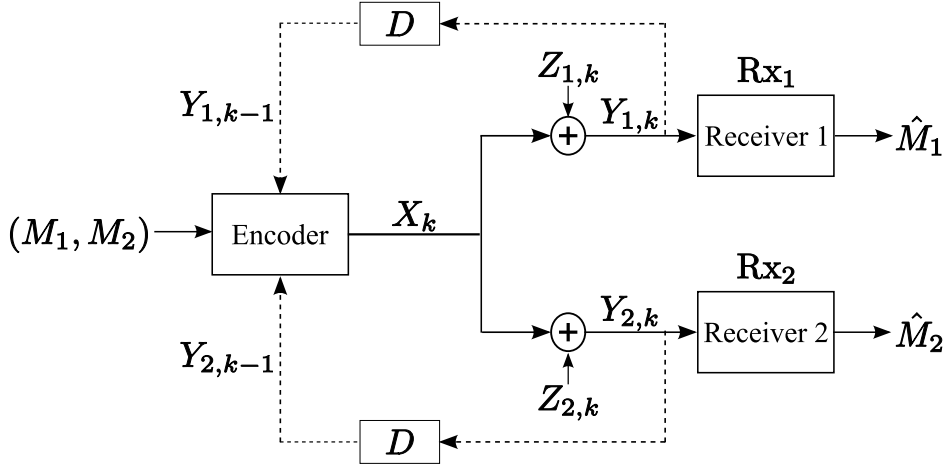}
		\captionsetup{font=footnotesize}
    \caption{The Gaussian broadcast channel with feedback links. The blocks denoted by $D$ represent a unit delay.}
    \label{fig:GBC}
    \vspace{-0.55cm}
\end{figure}

\noindent 
The transmitted signal is subject to the average power constraint \cite{OzarowBC:84}, \cite{AMM:12}:
\vspace{-0.15cm}
\begin{equation}
  \sum_{k=1}^{n} \Ec { X_k^2 } \le nP.
	\label{eq:avgPowerConstraint}
\end{equation}

%
%
\vspace{-0.15cm}
	\noindent The {\em probability of error} at Rx$_i$ is defined as: $P_{e,i}^{(n)} \mspace{-4mu} \triangleq \mspace{-4mu} \Pr \{ \hat{M}_{i} \mspace{-4mu} \neq \mspace{-4mu} M_i \}$. We say that $(\mathrm{R}_1, \mathrm{R}_2)$ is an {\em achievable rate pair} subject to the power constraint \eqref{eq:avgPowerConstraint} if there exists a sequence of $(\mathrm{R}_1, \mathrm{R}_2, n)$ codes satisfying \eqref{eq:avgPowerConstraint}, such that ${\dst \lim_{n \to \infty} P_{e,i}^{(n)} \mspace{-4mu} = \mspace{-4mu} 0}$. 
\noindent Next, we briefly review the OL scheme of~\cite{OzarowBC:84}.


\vspace{-0.3cm}
\subsection{A Short Review of the OL Scheme} \label{subsec:OL_basic}

	\ifthenelse{\boolean{SingleColumn}}{}{}

\vspace{-0.1cm}
In the OL scheme \cite{OzarowBC:84}, prior to transmitting a channel symbol, the transmitter determines the estimation errors at the receivers based on the noiseless feedback, and then sends a linear combination of these errors.
Thus, the channel output at each receiver consists of its estimation error corrupted by a correlated noise term, which consists of the other receiver's error and additive Gaussian noise. Each receiver then updates its estimation according to its observed channel output, thereby, decreasing the variance of its estimation error. 

{\bf {\slshape Setup}:}
Each message $m_i \in \mM_i$ is mapped into a PAM constellation point, $\theta_i$, uniformly distributed over the interval $[-0.5, 0.5]$.
Next, define $\hat{\Theta}_{i,k}$ to be the estimate of the constellation point $\Theta_i$ at the $i$'th receiver, after observing the $k$'th channel output $Y_{i,k}$. Let $\eps_{i,k} \mspace{-3mu} \triangleq \mspace{-3mu} \hat{\Theta}_{i,k} \mspace{-3mu} - \mspace{-3mu} \Theta_i$ be the estimation error after $k$ transmissions, and define $\heps_{i,k-1} \mspace{-3mu} \triangleq \mspace{-3mu} \hat{\Theta}_{i,k-1} \mspace{-3mu} - \mspace{-3mu} \hat{\Theta}_{i,k}$. Thus, we can write $\eps_{i,k} \mspace{-3mu} = \mspace{-3mu} \eps_{i,k-1} \mspace{-3mu} - \mspace{-3mu} \heps_{i,k-1}$. 
 We also define $\alpha_{i,k} \mspace{-3mu} \triangleq \mspace{-3mu} \E \{ \eps_{i,k}^2 \}$ to be the MSEs after $k$ transmissions, and $\rho_k \mspace{-3mu} \triangleq \mspace{-3mu} \frac{\Ec{ \eps_{1,k} \eps_{2,k} }}{\sqrt{\alpha_{1,k}\alpha_{2,k}}}$ to be the correlation coefficient between the estimation errors.

{\bf {\slshape Initialization}:} In the first two transmissions $X_k = \sqrt{12P} \mspace{-4mu} \cdot \mspace{-4mu} \Theta_k, k \mspace{-4mu} = \mspace{-4mu} 1,2$, are sent.
After the first transmission, Rx$_1$ estimates $\Theta_1$ via $\hat{\Theta}_{1,1} \mspace{-4mu} = \mspace{-4mu} \frac{Y_{1,1}}{\sqrt{12P}}$. Rx$_1$ ignores the second transmission and sets $\hat{\Theta}_{1,2} \mspace{-5mu} = \mspace{-5mu} \hat{\Theta}_{1,1}$. Similarly, Rx$_2$ ignores the first transmission and sets $\hat{\Theta}_{2,2} \mspace{-4mu} = \mspace{-4mu} \frac{Y_{2,2}}{\sqrt{12P}}$. Therefore $\alpha_{i,2} \mspace{-4mu} = \mspace{-4mu} \frac{\sigma_i^2}{12P}$, and $\rho_2 \mspace{-3mu} = \mspace{-3mu} 0$.

{\bf {\slshape Encoding}:}
Let $g \mspace{-3mu} > \mspace{-3mu} 0$ be a constant which facilitates a tradeoff between $\mathrm{R}_1$ and $\mathrm{R}_2$, and let $\Psi_{k} \mspace{-4mu} \triangleq \mspace{-4mu} \sqrt{\frac{P}{1 \mspace{-2mu} + \mspace{-2mu} g^2 \mspace{-2mu} + \mspace{-2mu} 2g|\rho_k|}}$.
At the $k$'th transmission, $k \ge 3$, the transmitter sends \cite[pg. 668]{OzarowBC:84}:
\vspace{-0.2cm}
\begin{align}
	X_k \mspace{-3mu} = \mspace{-3mu} \Psi_{k-1} \left( \frac{\eps_{1,k-1}}{\sqrt{\alpha_{1,k-1}}} + \frac{\eps_{2,k-1}}{\sqrt{\alpha_{2,k-1}}} \cdot g \cdot \sgn(\rho_{k-1}) \right),
	\label{eq:OL_txSignal}
\end{align}

\vspace{-0.15cm}
\noindent and the corresponding channel outputs are given by \eqref{eq:signalModel}.


{\bf {\slshape Decoding}:}
Rx$_i$ estimates $\eps_{i,{k-1}}, i=1,2$, based {\em only} on $Y_{i,k}$: $\hat{\eps}_{i,{k-1}} \mspace{-5mu} = \mspace{-5mu} \E \{\eps_{i,{k-1}} | Y_{i,k}\} \mspace{-4mu} = \mspace{-4mu} \frac{\Ec{ \eps_{i,k-1} Y_{i,k} }}{\Ec{ Y_{i,k}^2 }} Y_{i,k}$.
Let $\pi_i \mspace{-3mu} \triangleq \mspace{-3mu} P + \sigma_{i}^2, \Sigma \mspace{-3mu} \triangleq \mspace{-3mu} P + \sigma_{1}^2 + \sigma_{2}^2 - \rho_z\sigma_{1}\sigma_{2}$, and $\vsig_i^2 \mspace{-3mu} \triangleq \mspace{-3mu} \sigma_{i}^2 - \rho_z \sigma_{1} \sigma_{2}$. 
Then, $\alpha_{i,k}$ are given by the recursive expressions \cite[Eqs. (5)--(6)]{OzarowBC:84}:
\vspace{-0.2cm}
\begin{align}
	\alpha_{i,k} & = \alpha_{i,k-1} \frac{\sigma_{i}^2 + \Psi_{k-1}^2 g^{4-2i} (1 - \rho_{k-1}^2)}{\pi_i}, \quad i=1,2, \label{eq:OzarowVar}
	%
\end{align}

\vspace{-0.15cm}
\noindent where the recursive expression for $\rho_k$ is given by \cite[Eq. (7)]{OzarowBC:84}:
\begin{align}
		\rho_k \mspace{-4mu} = \mspace{-4mu} \frac{(\rho_z \sigma_{1} \sigma_{2} \Sigma \mspace{-4mu} + \mspace{-4mu} \vsig_1^2 \vsig_2^2)\rho_{k-1} \mspace{-4mu} - \mspace{-4mu} \Psi_{k-1}^2 \Sigma \cdot g (1 \mspace{-4mu} - \mspace{-4mu} \rho_{k-1}^2) \sgn(\rho_{k-1})}{\sqrt{\pi_1 \pi_2} \sqrt{ \sigma_{1}^2 \mspace{-4mu} + \mspace{-4mu} \Psi_{k-1}^2 g^2 (1 \mspace{-4mu} - \mspace{-4mu} \rho_{k-1}^2)} \sqrt{\sigma_{2}^2 \mspace{-4mu} + \mspace{-4mu} \Psi_{k-1}^2 (1 \mspace{-4mu} - \mspace{-4mu} \rho_{k-1}^2)}}.
		\label{eq:OLRho}
\end{align}

\vspace{-0.1cm}
\noindent In \cite{OzarowBC:84} it was shown that there exists a ${\rho} \in [0,1]$ such that when $|\rho_{k-1}| \mspace{-3mu} = \mspace{-3mu} {\rho}$ then $\rho_k \mspace{-3mu} = \mspace{-3mu} -\rho_{k-1}$. This $\rho$ is a root of the polynomial obtained by setting $\rho_k \mspace{-3mu} = \mspace{-3mu}\rho$ and $\rho_{k-1} \mspace{-3mu} = \mspace{-3mu} -\rho$ in \eqref{eq:OLRho}.
Let $\tilde{\rho}$ denote the largest root of this polynomial in $[0,1]$. 
In \cite{OzarowBC:84} it is shown how to initialize the transmission to guarantee $|\rho_k| \mspace{-3mu} = \mspace{-3mu} \tilde{\rho} \mspace{-3mu} \equiv \mspace{-3mu} \rho_{\text{OL}}, k \mspace{-3mu} \ge \mspace{-3mu} 3$. After $n$ channel uses Rx$_i$ employs a maximum likelihood decoder to recover $M_i$.
Let $\tilde{\Psi} \mspace{-3mu} \triangleq \mspace{-3mu} \frac{P}{1 \mspace{-2mu} + \mspace{-2mu} g^2 \mspace{-2mu} + \mspace{-2mu} 2g\tilde{\rho}}$. Then, the rates achieved by the OL scheme are given by \cite[Eq. (9)]{OzarowBC:84}:
\vspace{-0.15cm}
\begin{align}
	\mathrm{R}_i & \mspace{-3mu} < \mspace{-3mu} \frac{1}{2} \log_2 \mspace{-2mu} \left( \mspace{-2mu} \frac{\pi_i}{\sigma_{i}^2 \mspace{-3mu} + \mspace{-3mu} \tilde{\Psi}^2 g^{4-2i} (1 \mspace{-3mu} - \mspace{-3mu} \tilde{\rho}^2)} \mspace{-2mu} \right) \mspace{-3mu} \triangleq \mspace{-3mu} \mathrm{R}_i^{\text{OL}}.  \label{eq:OLrates}
\end{align}

\vspace{-0.35cm}
\section{A New Extended OL Scheme} \label{sec:OL_extended}

\vspace{-0.1cm}
The MMSE estimator of $\eps_{i,k-1}$ based on the channel outputs $\Yvec_{i,1}^k$, is given by $\E \{\eps_{i,k-1} | \Yvec_{i,1}^k \}$.
Yet, as successive channel outputs are not independent, obtaining an explicit expression for this estimator is analytically intractable. 
In the OL scheme, the estimates of $\eps_{i,k-1}$ are generated based {\em only} on $Y_{i,k}$. 
These estimators are suboptimal since $\Yvec_{i,1}^{k-1}$ and $\eps_{i,k-1}$ are correlated.
A natural way to improve upon the OL scheme is estimating $\eps_{i,k-1}$ based on $[Y_{i,k}, Y_{i,k-1}]^T \mspace{-3mu} \triangleq \mspace{-3mu} \tYvec_{i,k}$. We refer to this as extended OL (EOL). The EOL encoding is done as in \eqref{eq:OL_txSignal}.
Let $\sfQ_{\tYvec_{i,k}}$ denote the covariance matrix of the vector $\tYvec_{i,k}$.
	%
	\noindent Since $ \eps_{i,k-1}$ and $\tYvec_{i,k}$ are jointly Gaussian,
	the MMSE estimator of $\epsilon_{i,k-1}$ based on $\tYvec_{i,k}$ is given~by \cite[Eq. (12.6)]{Kay}:
	\vspace{-0.2cm}
	\begin{equation}
		\hat{\eps}_{i,{k-1}} = \Ec{ \eps_{i,k-1} \cdot (\tYvec_{i,k})^T } \cdot \sfQ_{\tYvec_{i,k}}^{-1} \cdot \tYvec_{i,k}.
	\label{eq:LmmseEstWithMem}
	\end{equation}
	
	\vspace{-0.15cm}
	\noindent The following theorem explicitly characterizes $\hat{\eps}_{i,{k-1}}$ in \eqref{eq:LmmseEstWithMem}:
	\begin{theorem}
		\thmlabel{thm:ImprovedOL}
			\ifthenelse{\boolean{SingleColumn}}{}{}
			\ifthenelse{\boolean{SingleColumn}}{}{}
		
	
	\end{theorem}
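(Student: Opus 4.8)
\noindent\emph{Proof outline.} The plan is to evaluate the linear MMSE expression \eqref{eq:LmmseEstWithMem} by computing its two ingredients, the $2\times 2$ covariance matrix $\sfQ_{\tYvec_{i,k}}$ and the cross-covariance vector $\Ec{\eps_{i,k-1}(\tYvec_{i,k})^T}$. Since the power normalization $\Psi_{k-1}^2(1+g^2+2g|\rho_{k-1}|)=P$ forces $\Ec{X_k^2}=P$ for every $k$, the channel outputs are ``stationary'' in power, $\Ec{Y_{i,k}^2}=\pi_i$; hence $\sfQ_{\tYvec_{i,k}}$ is the matrix with $\pi_i$ on the diagonal and $\lambda_{i,k-1}\triangleq\Ec{Y_{i,k}Y_{i,k-1}}$ off it, and its inverse is $(\pi_i^2-\lambda_{i,k-1}^2)^{-1}$ times the matrix with $\pi_i$ on the diagonal and $-\lambda_{i,k-1}$ off it. The second entry of the cross-covariance vector, $\Ec{\eps_{i,k-1}Y_{i,k-1}}$, vanishes by the orthogonality principle \cite{Kay}: $\eps_{i,k-1}=\eps_{i,k-2}-\hat\eps_{i,k-2}$ is precisely the residual of the MMSE estimate of $\eps_{i,k-2}$ from $\tYvec_{i,k-1}=[Y_{i,k-1},Y_{i,k-2}]^T$, so it is orthogonal to $Y_{i,k-1}$. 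The first entry, $\Ec{\eps_{i,k-1}Y_{i,k}}=\Ec{\eps_{i,k-1}X_k}$, is computed directly from the encoder \eqref{eq:OL_txSignal} using $\Ec{\eps_{1,k-1}\eps_{2,k-1}}=\rho_{k-1}\sqrt{\alpha_{1,k-1}\alpha_{2,k-1}}$ and the independence of $Z_{i,k}$ from the past; it equals $\Psi_{k-1}\sqrt{\alpha_{1,k-1}}(1+g|\rho_{k-1}|)$ for $i=1$ and $\Psi_{k-1}\sqrt{\alpha_{2,k-1}}(g+|\rho_{k-1}|)\sgn(\rho_{k-1})$ for $i=2$. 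Substituting the inverse matrix and this vector into \eqref{eq:LmmseEstWithMem} gives \eqref{eq:heps_dualMem}, with the initialization $\lambda_{i,j}=0$, $j=1,2$, inherited from the two PAM-bearing transmissions.

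For the MSE recursion \eqref{eq:alpha_duoMem} I would use the Pythagorean identity $\alpha_{i,k}=\Ec{\eps_{i,k}^2}=\alpha_{i,k-1}-\Ec{\hat\eps_{i,k-1}^2}$, which holds because $\eps_{i,k}=\eps_{i,k-1}-\hat\eps_{i,k-1}$ is orthogonal to $\hat\eps_{i,k-1}$. Since the cross-covariance vector has a single nonzero entry, $\Ec{\hat\eps_{i,k-1}^2}=\Ec{\eps_{i,k-1}(\tYvec_{i,k})^T}\sfQ_{\tYvec_{i,k}}^{-1}\Ec{\tYvec_{i,k}\eps_{i,k-1}}=\pi_i(\pi_i^2-\lambda_{i,k-1}^2)^{-1}\big(\Ec{\eps_{i,k-1}Y_{i,k}}\big)^2$; plugging in the value of $\Ec{\eps_{i,k-1}Y_{i,k}}$ and invoking the elementary identities $\Psi_{k-1}^2(1+g|\rho_{k-1}|)^2=P-\Psi_{k-1}^2 g^2(1-\rho_{k-1}^2)$ and $\Psi_{k-1}^2(g+|\rho_{k-1}|)^2=P-\Psi_{k-1}^2(1-\rho_{k-1}^2)$ (both consequences of $\Psi_{k-1}^2(1+g^2+2g|\rho_{k-1}|)=P$) turns $\alpha_{i,k-1}-\Ec{\hat\eps_{i,k-1}^2}$ into exactly \eqref{eq:alpha_duoMem}.

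The remaining recursions, for $\lambda_{i,k-1}$ in \eqref{eq:lambdas_def} and for $\rho_k$ in \eqref{eq:rho_terms_ext}, are obtained by propagating the second-order statistics of the outputs through $\eps_{i,k}=\eps_{i,k-1}-\hat\eps_{i,k-1}$ after replacing $\hat\eps_{i,k-1}$ by the explicit linear form \eqref{eq:heps_dualMem}. For $\lambda_{i,k-1}=\Ec{Y_{i,k}Y_{i,k-1}}=\Ec{X_kY_{i,k-1}}$ I would expand $X_k$ via \eqref{eq:OL_txSignal}: the self term $\Ec{\eps_{i,k-1}Y_{i,k-1}}$ is zero by orthogonality, while the cross-receiver term $\Ec{\eps_{j,k-1}Y_{i,k-1}}$, $j\neq i$, is resolved by one further step of the recursion, $\eps_{j,k-1}=\eps_{j,k-2}-\hat\eps_{j,k-2}$, followed by substitution of \eqref{eq:heps_dualMem} at index $k-2$; this reduces everything to $\Ec{Y_{i,k-1}Y_{j,k-1}}$ and $\Ec{Y_{i,k-1}Y_{j,k-2}}$, which are in turn expressed through $\alpha_{\cdot,\cdot}$, $\rho_{\cdot}$, $\lambda_{\cdot,\cdot}$ and the noise correlation $\rho_z$ --- this is where $\vsig_2^2=\sigma_2(\sigma_2-\rho_z\sigma_1)$ enters the numerator of \eqref{eq:lambda1_km1_recursive}. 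For $\rho_k$, expand $\rho_k\sqrt{\alpha_{1,k}\alpha_{2,k}}=\Ec{\eps_{1,k}\eps_{2,k}}$ into the four products $\Ec{\eps_{1,k-1}\eps_{2,k-1}}$, $-\Ec{\eps_{1,k-1}\hat\eps_{2,k-1}}$, $-\Ec{\hat\eps_{1,k-1}\eps_{2,k-1}}$ and $\Ec{\hat\eps_{1,k-1}\hat\eps_{2,k-1}}$, evaluate each using the explicit estimators and the cross-correlation identities just derived (the last product also needs $\Ec{Y_{1,k}Y_{2,k}}$, $\Ec{Y_{1,k}Y_{2,k-1}}$, $\Ec{Y_{1,k-1}Y_{2,k}}$ and $\Ec{Y_{1,k-1}Y_{2,k-1}}$), then divide by $\sqrt{\alpha_{1,k}\alpha_{2,k}}$ using \eqref{eq:alpha_duoMem}; regrouping the result yields $\rho_k=T_{k-1}/\Omega_{k-1}$ with $T_{k-1},\Omega_{k-1}$ as in \eqref{eq:rho_terms_ext}.

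The main obstacle is this last step: maintaining a consistent, closed bookkeeping of the full set of cross-receiver and cross-time second-order statistics of the channel outputs and carrying through the lengthy (but elementary) algebraic simplifications --- in particular spotting the reusable identities $P-\Psi_k^2(1-\rho_k^2)=\Psi_k^2(g+|\rho_k|)^2$ and $\Psi_k^2(1+g^2+2g|\rho_k|)=P$ --- so that the raw expressions collapse to the compact forms in \eqref{eq:lambdas_def} and \eqref{eq:rho_terms_ext}. By comparison, deriving the estimator form \eqref{eq:heps_dualMem} and the MSE recursion \eqref{eq:alpha_duoMem} is short once the orthogonality principle and the power normalization are in place.
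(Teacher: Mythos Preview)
Your outline is correct and follows essentially the same route as the paper's own proof: define $\lambda_{i,k-1}$ as the off-diagonal entry of $\sfQ_{\tYvec_{i,k}}$, compute $\hat\eps_{i,k-1}$ directly from \eqref{eq:LmmseEstWithMem}, then obtain \eqref{eq:lambdas_def}, \eqref{eq:alpha_duoMem}, and \eqref{eq:rho_terms_ext} by explicitly evaluating $\Ec{Y_{i,k}Y_{i,k-1}}$, $\Ec{\eps_{i,k}^2}$, and $\Ec{\eps_{1,k}\eps_{2,k}}/\sqrt{\alpha_{1,k}\alpha_{2,k}}$. Your write-up is in fact more explicit than the paper's sketch --- in particular, your use of the orthogonality principle to kill $\Ec{\eps_{i,k-1}Y_{i,k-1}}$ and your identification of the identities $\Psi_{k-1}^2(1+g|\rho_{k-1}|)^2=P-\Psi_{k-1}^2 g^2(1-\rho_{k-1}^2)$ and its dual are exactly the simplifications that make the closed forms emerge, and your assessment that the heavy lifting lies in the cross-receiver bookkeeping for \eqref{eq:lambdas_def} and \eqref{eq:rho_terms_ext} is accurate.
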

	
	\begin{proof}[$\mspace{-20mu}$ Proof outline]
		Let $\lambda_{i,k-1}$ denote the off-diagonal elements of $\sfQ_{\tYvec_{i,k}}$ (the two off-diagonal elements of $\sfQ_{\tYvec_{i,k}}$ are equal). Explicit direct calculation of $\hat{\eps}_{i,{k-1}}$, in terms of $\rho_{k-1}, \alpha_{i,k-1}$ and $\lambda_{i,k-1}$, results in \eqref{eq:heps_dualMem}. The recursive expressions in \eqref{eq:lambdas_def} are then obtained via an explicit calculation of $\E \{Y_{i,k} Y_{i,k-1} \}$, and the instantaneous MSEs in \eqref{eq:alpha_duoMem} are calculated via $\E \{ \eps_{i,k}^2 \}$.		
		Finally, the instantaneous correlation coefficient is calculated via $\rho_k \mspace{-3mu} \triangleq \mspace{-3mu} \frac{\Ec{ \eps_{1,k} \eps_{2,k} }}{\sqrt{\alpha_{1,k}\alpha_{2,k}}}$.
	\end{proof}
	
	\begin{remark}
		Fixing $\lambda_{i,k} \mspace{-1mu} = \mspace{-1mu} 0, k \ge 1$, EOL specializes to OL.
	\end{remark}
	
	Similarly to the OL scheme, the EOL scheme has a fixed-point, which is stated in the following theorem:
	\begin{theorem}
    \thmlabel{thm:fixedPoinfBothMem}
		Consider the EOL scheme with the decoders given in \eqref{eq:heps_dualMem}--\eqref{eq:rho_terms_ext} and encoding given in \eqref{eq:OL_txSignal}. Then, there exists a $(\rho, \lambda_1,\lambda_2) \in [0,1]\times \realSet^2$ such that if $|\rho_{k-1}| = {\rho}, \lambda_{i,k-1}=\lambda_i, i=1,2$, then $|\rho_k| = \rho, \lambda_{i,k}=\lambda_i,i=1,2$.
	\end{theorem}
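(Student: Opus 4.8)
The plan is to read the recursions of \thmref{thm:ImprovedOL} as a single autonomous map on the triple $(\rho,\lambda_1,\lambda_2)$ and to produce the claimed fixed point by a topological (Brouwer) argument, mirroring the way the OL fixed point is obtained in \cite{OzarowBC:84}. The first point to establish is that the map is indeed autonomous: inspecting \eqref{eq:lambdas_def}, \eqref{eq:alpha_duoMem} and \eqref{eq:rho_terms_ext}, the update of $\rho_k$ and of $\lambda_{i,k}$ involves only $\rho_{k-1},\lambda_{1,k-1},\lambda_{2,k-1}$ together with the fixed channel constants $\pi_i,\Sigma,\vsig_i,g,P,\rho_z$ (through $\Psi_{k-1},\Psi_k$ and $\vphi_{k-1}$), and never the MSEs $\alpha_{i,k-1}$; the $\alpha_{i,k}$ are then slaved to the triple via \eqref{eq:alpha_duoMem}. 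Hence it suffices to find $(\rho,\lambda_1,\lambda_2)$ fixed by the one-step map $\Phi:(\rho_{k-1},\lambda_{1,k-1},\lambda_{2,k-1})\mapsto(|\rho_k|,\lambda_{1,k},\lambda_{2,k})$. Working with $|\rho|$ is legitimate because the recursions are sign-covariant in $\rho$: replacing $\rho_{k-1}$ by $-\rho_{k-1}$ (with the $\lambda$'s fixed) sends $T_{k-1}\mapsto-T_{k-1}$ while $\Omega_{k-1}$ is unchanged, so $\rho_k\mapsto-\rho_k$, whereas $\lambda_{2,k}$ is unchanged (it depends on $\rho$ only through $|\rho_{k-1}|$ and $\rho_{k-1}^2$) and $\lambda_{1,k}$ is unchanged as well, since it depends on the \emph{product} $\sgn(\rho_k)\sgn(\rho_{k-1})$, which is invariant under the flip. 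Moreover, as in \cite{OzarowBC:84} the sign of $\rho_k$ alternates along the iteration, so the sign factors act as constants on the region under consideration and $\Phi$ is continuous there (the only candidate singularities are the vanishing of $\pi_i^2-\lambda_{i,k-1}^2$, of $\pi_i^2-\lambda_{i,k-1}^2-P\pi_i+\cdots$, or of $\Omega_{k-1}$, all excluded below).

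Next I would fix the domain. Because $X_k$ has power $\E\{X_k^2\}=P$ and is independent of the noises, $\var(Y_{i,k})=\pi_i$, so $\lambda_{i,k}=\E\{Y_{i,k}Y_{i,k-1}\}$ obeys $\lambda_{i,k}^2\le\pi_i^2$ by Cauchy--Schwarz; in fact the output covariance matrix $\sfQ_{\tYvec_{i,k}}$ is positive semidefinite, and the radicands appearing in \eqref{eq:lambdas_def}, \eqref{eq:alpha_duoMem} and \eqref{eq:Omega_k1_def} are, up to positive factors, the associated conditional variances / Schur complements of $\sfQ_{\tYvec_{i,k}}$ and of the joint law of $(\eps_{i,k-1},\tYvec_{i,k})$, hence nonnegative. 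I would therefore take $\mathcal K\subset[0,1]\times\realSet^2$ to be the closed region cut out by $\rho\in[0,1]$ together with these finitely many radicand inequalities --- equivalently, the closure $\overline{\mathcal V}$ of the set of covariance-realizable triples --- and check that it is compact and convex and that $\Phi$ is continuous on it (the denominators above stay bounded away from $0$ in the interior of $\mathcal K$). The self-mapping property $\Phi(\mathcal K)\subseteq\mathcal K$ then follows structurally: any triple realized by a legitimate choice of covariances is carried by one step of the EOL recursion to the triple realized by the covariances one channel use later, which is again covariance-realizable, so $\overline{\mathcal V}$ is forward-invariant.

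With $\Phi$ continuous and $\Phi(\mathcal K)\subseteq\mathcal K$, Brouwer's fixed-point theorem yields $(\rho,\lambda_1,\lambda_2)\in\mathcal K$ with $\Phi(\rho,\lambda_1,\lambda_2)=(\rho,\lambda_1,\lambda_2)$, i.e.\ $|\rho_k|=\rho$ and $\lambda_{i,k}=\lambda_i$ whenever $|\rho_{k-1}|=\rho$ and $\lambda_{i,k-1}=\lambda_i$ (the case $\rho_{k-1}=-\rho$ follows from the sign-covariance above), which is the assertion. I expect the genuinely delicate point to be the self-mapping / well-definedness of $\Phi$ on a region one can actually write down explicitly and argue is compact and convex --- the radicand constraints are nonlinear in $\rho$ through $\Psi_{k-1}$. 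A robust fallback that follows the OL template most closely is to reduce to a scalar equation: for each fixed $\rho\in[0,1]$ the two fixed-point equations $\lambda_i=\lambda_{i,k}$ couple only $\lambda_1$ and $\lambda_2$, and \eqref{eq:lambda1_km1_recursive}--\eqref{eq:lambda2_km1_recursive} express $\lambda_1$ in terms of $\lambda_2^2$ and $\rho$ and $\lambda_2$ in terms of $\lambda_1^2$ and $\rho$; substituting, $\lambda_2$ solves a single equation on the $\rho$-dependent admissible interval, from which one extracts a continuous branch $\lambda_2(\rho)$ (hence $\lambda_1(\rho)$) by a one-dimensional fixed-point/monotonicity argument --- here the signs of the constants $\sigma_i-\rho_z\sigma_j$ enter and require care. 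Plugging back, $R(\rho)\triangleq|\rho_k(\rho,\lambda_1(\rho),\lambda_2(\rho))|-\rho$ is continuous on $[0,1]$ with $R(0)\ge0$ trivially and $R(1)\le0$ since $|\rho_k|\le1$, so the intermediate value theorem supplies a root $\rho^{*}$, and $(\rho^{*},\lambda_1(\rho^{*}),\lambda_2(\rho^{*}))$ is the desired fixed point.
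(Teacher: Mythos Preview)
Your approach is essentially the paper's: both reduce to an autonomous one-step map on the triple and apply Brouwer's fixed-point theorem. The execution differs in two places worth noting. First, instead of $|\rho_k|$ the paper works with $\rho_k^2$, after freezing the sign factor $\xi_k\triangleq\sgn(\rho_k)\sgn(\rho_{k-1})$ to $1$; with $\xi=1$, both $T_{k-1}^2$ and $\Omega_{k-1}^2$ (hence $\rho_k^2$) and the $\lambda_{i,k}$ depend on $\rho_{k-1}$ only through $\rho_{k-1}^2$, so the reduced map $\nu_1$ lives on $(\lambda_1,\lambda_2,\rho^2)$. Second, for the domain the paper simply takes the box $[-\pi_1,\pi_1]\times[-\pi_2,\pi_2]\times[0,1]$, invoking only the Cauchy--Schwarz bound $\lambda_{i,k}^2<\pi_i^2$; this is trivially compact and convex and thereby sidesteps the convexity issue for your covariance-realizable set $\mathcal K$ that you correctly flagged as the delicate point. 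Brouwer then gives a fixed point $\bar V_1=(\bar\lambda_1,\bar\lambda_2,\bar\rho^2)$ of $\nu_1$, and a short second step checks from \eqref{eq:lambda1_km1_recursive} that at this fixed point $\xi_k$ is also preserved, so $\bar V_1$ lifts to a fixed point of the full map. Your scalar/IVT fallback does not appear in the paper; the box-domain choice makes it unnecessary.
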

	
	%

	\begin{proof}[$\mspace{-20mu}$ Proof]
		First, note that the method used to prove the fixed point for the OL scheme cannot be applied to the EOL due to the terms $\lambda_{i,k-1}$, cf. \cite[pg. 669]{OzarowBC:84}. 		
		
		The fixed point is proven by applying Brouwer's fixed-point theorem, \cite[Subsection 12.8.4]{MathHandbook}, to the estimation scheme \eqref{eq:lambdas_def}--\eqref{eq:rho_terms_ext}. 
		Let $\xi_k \mspace{-3mu} \triangleq \mspace{-3mu} \sgn(\rho_{k}) \sgn(\rho_{k-1}) \mspace{-3mu} \in \mspace{-3mu} \{1,-1 \}$, and define the vector $\Vvec_k \triangleq [\lambda_{1,k}, \lambda_{2,k}, \rho_k^2, \xi_k]$.  
		Eqs. \eqref{eq:heps_dualMem}--\eqref{eq:rho_terms_ext} imply that $\Vvec_{k-1}$ determines $\lambda_{i,k}$ and $\rho_k$. 
		Let $\nu$ denote the mapping from $\Vvec_{k-1}$ to $\Vvec_{k}$ and let $\nu_1$ denote the mapping from $\Vvec_{k-1}$ to $\Vvec_{k}$ when $\xi_k \mspace{-4mu} = \mspace{-4mu} 1, \forall k$. 
		We prove that $\nu$ has a fixed point in two steps: First, we show that $\nu_1$ has a fixed point. Then, we show that a fixed point of $\nu_1$ translates into a fixed point of~$\nu$.
		
		{\bf {\slshape Fixed point of $\nu_1$}:}
		Assume that $\xi_k=1, \forall k$, and define $\Vvec_{1,k} \triangleq [\Vvec_k]_{\xi_k=1}$. 
		We show that $\Vvec_{1,k} = \nu_1(\Vvec_{1,k-1})$, i.e., knowledge of $\Vvec_{1,k-1}$ and constants is sufficient to calculate $\Vvec_{1,k}$. 
		Eq. \eqref{eq:Omega_k1_def} implies that $\Omega_{k-1}^2$ is a function of $\rho_{k-1}^2, \lambda_{1,k-1}^2$ and $\lambda_{2,k-1}^2$. Similarly, from \eqref{eq:Theta_k1_def} we have that $T_{k-1}^2$ is also a function of $\rho_{k-1}^2, \lambda_{1,k-1}^2$ and $\lambda_{2,k-1}^2$. 
		%
	%
		\noindent Therefore, for $\xi_{k-1}=1$ we have that $\rho_k^2$ can be obtained from $\rho_{k-1}^2, \lambda_{1,k-1}$ and $\lambda_{2,k-1}$. From \eqref{eq:lambdas_def} it follows that for $\xi_{k-1}=1$, $\lambda_{i,k}$ are functions of $\rho_{k-1}^2, \lambda_{1,k-1}$ and $\lambda_{2,k-1}$. 
			
		Noting that $\lambda_{i,k}^2 < \pi_i^2, \forall k$, we conclude that for $\mA \triangleq [-\pi_1,\pi_1] \times [-\pi_1,\pi_2] \times [0,1]$ the mapping $\nu_1$ obeys $\nu_1: \mA \mapsto \mA$. Finally, recall Brouwer's fixed-point theorem which states that if $\mD$ is convex and compact and $h: \mD \mapsto \mD$ is a continuous function, then $h$ has a fixed point. As $\mA$ is compact and convex, it follows that $\nu_1$ has a fixed point. We denote this fixed point by~$\bar{\Vvec}_1 = [\bar{\lambda_1}, \bar{\lambda_2}, \bar{\rho}^2]$.
		
		{\bf {\slshape Fixed point of $\nu$}:} 
		We show that $\nu([\bar{\Vvec}_1, 1]) \mspace{-3mu} = \mspace{-3mu} [\bar{\Vvec}_1, 1]$. 
		As $\bar{\Vvec}_1$ is a fixed point of $\nu_1$, it follows that if $\rho^2_{k-1} \mspace{-3mu} = \mspace{-3mu} \bar{\rho}^2, \lambda_{i,k-1} \mspace{-3mu} = \mspace{-3mu} \bar{\lambda}_{i}$, then $\rho^2_{k} \mspace{-3mu} = \mspace{-3mu} \bar{\rho}^2, \lambda_{i,k} \mspace{-3mu} = \mspace{-3mu} \bar{\lambda}_{i}$. 
		Therefore, as $\lambda_{1,k}=\lambda_{1,k-1}$, \eqref{eq:lambda1_km1_recursive} implies that if $\xi_{k-1} \mspace{-4mu} = \mspace{-4mu} \bar{\xi}$ then $\xi_{k} \mspace{-4mu} = \mspace{-4mu} \bar{\xi}$. Thus, $\nu$ has a fixed point. 
		The proof is the same for $\xi_k \mspace{-4mu} = \mspace{-4mu} -1$.				
	\end{proof}
	
	%
	
	Let $\bar{V}=[\bar{\lambda}_1, \bar{\lambda}_2, \bar{\rho}^2, \bar{\xi}]$ be a fixed point of $\nu$, and let $\bar{\Psi} \mspace{-3mu} \triangleq \mspace{-3mu} \frac{P}{1 \mspace{-2mu} + \mspace{-2mu} g^2 \mspace{-2mu} + \mspace{-2mu} 2g\bar{\rho}}$. 
	Similarly to \cite[pg. 669]{OzarowBC:84} the initialization procedure can be designed to guarantee $|\rho_2| \mspace{-3mu} = \mspace{-3mu} \bar{\rho} \mspace{-3mu} \equiv \mspace{-3mu} \rho_{\text{EOL}}$. Further setting $\lambda_{i,2}=\bar{\lambda}_i$ will result in $|\rho_k|=\bar{\rho}$ and $\lambda_{i,k}=\bar{\lambda}_i$ for $k\ge 3$. Therefore, the EOL scheme achieves rate pairs satisfying:
	\vspace{-0.15cm}
	\begin{align}
		\mspace{-10mu} \mathrm{R}_i & \mspace{-4mu} < \mspace{-4mu} \frac{1}{2} \mspace{-2mu} \log \mspace{-3mu} \left( \mspace{-3mu}  \frac{\pi_i^2 - \bar{\lambda}_{i}^2}{\pi_i^2 \mspace{-4mu} - \mspace{-4mu} \bar{\lambda}_{i}^2 \mspace{-3mu} - \mspace{-3mu} P \pi_i \mspace{-3mu} + \mspace{-3mu} \bar{\Psi}^2 g^{4-2i} (1 \mspace{-4mu} - \mspace{-4mu} \bar{\rho}^2 \mspace{-1mu} ) \pi_i} \mspace{-3mu} \right) \mspace{-4mu} \triangleq \mspace{-4mu} \mathrm{R}_i^{\text{EOL}}.  \label{eq:OLrates_ext} 
		%
	\end{align}
	
	
	\vspace{-0.35cm}
	\section{Numerical Examples and A Discussion} \label{sec:examples}

	\vspace{-0.1cm}
	\subsection{The Acheivable Rate Region} \label{subsec:examples_rates}
	\vspace{-0.05cm}
	Consider the GBCF with $\sigma_1^2 \mspace{-4mu} = \mspace{-4mu} \sigma_2^2 \mspace{-4mu} = \mspace{-4mu} 1, \rho_z \mspace{-4mu} = \mspace{-4mu} 0$, and $P \mspace{-4mu} = \mspace{-4mu} 5$. Fig. \ref{fig:RateRegionSymm} illustrates the achievable rate regions of the OL scheme, the EOL scheme, and the LQG scheme of \cite[Thm. 1]{AMM:12}. The regions for OL and EOL are obtained by varying $g$ in the range $[0.01, 100]$. It can be observed that in this setting EOL outperforms OL, and that LQG outperforms both OL and EOL. 
	The subfigure in Fig. \ref{fig:RateRegionSymm} depicts $\rho_{\text{OL}}$ and $\rho_{\text{EOL}}$ versus $g$, for the same setting. It can be observed that $\rho_{\text{EOL}} \le \rho_{\text{OL}}$. 
	The intuition for this relationship is as follows: since the estimator \eqref{eq:heps1_dualMem} uses $Y_{1,k}$ and $Y_{1,k-1}$ for estimation, and since $Y_{1,k-1}$ is correlated with $\eps_{2,k-1}$, this reduces the correlation between $\eps_{1,k}=\eps_{1,k-1}-\hat{\eps}_{1,k-1}$ and $\eps_{2,k}=\eps_{2,k-1}-\hat{\eps}_{2,k-1}$, which leads to $\rho_{\text{EOL}} \le \rho_{\text{OL}}$. 
	
	\ifthenelse{\boolean{SingleColumn}}{}{}
	\ifthenelse{\boolean{SingleColumn}}{}{}
	
	
	Next, note that {\em in some scenarios OL can outperform EOL}. 
	The reason for this situation is that the achievable rates in the OL and EOL schemes are subject to two contradicting effects: while the subtraction of $\bar{\lambda}_{i}^2$ in the numerator and denominator of (12) increases $\mathrm{R}_i^{\text{EOL}}$ compared to $\mathrm{R}_i^{\text{OL}}$ (which corresponds to $\bar{\lambda}_i^2 = 0$), the fact that $\rho_{EOL}$ can be smaller than $\rho_{OL}$ can decrease $\mathrm{R}_i^{\text{EOL}}$ compared to $\mathrm{R}_i^{\text{OL}}$ (this follows as both $\mathrm{R}_i^{\text{OL}}$ and $\mathrm{R}_i^{\text{EOL}}$ increase with $\rho_{\text{OL}}$ and $\rho_{\text{EOL}}$, respectively).
	This situation is illustrated in Fig. 3 which presents the achievable rate regions for $\sigma_1^2 \mspace{-4mu} = \mspace{-4mu} 0.1, \sigma_2^2 \mspace{-4mu} =\mspace{-4mu} 50,\rho_z \mspace{-4mu} = \mspace{-4mu} 0$ and $P \mspace{-4mu} = \mspace{-4mu} 1$. It can be observed in the figure that for large $\mathrm{R}_1$ and small $\mathrm{R}_2$ OL outperforms EOL. 
	Finally, note that the OL and EOL schemes can be combined by applying a decoder which uses the estimator that achieves the largest $\mathrm{R}_2$ at any specific $\mathrm{R}_1$.

	\ifthenelse{\boolean{SingleColumn}}{}{}
	\ifthenelse{\boolean{SingleColumn}}{}{}
	
	\vspace{-0.35cm}
	\subsection{Probability of Error for Finite Blocklengths}
	
	\ifthenelse{\boolean{SingleColumn}}{}{}
	\ifthenelse{\boolean{SingleColumn}}{}{}
	
	\vspace{-0.05cm}
	Motivated by the results of \cite{MKDG:14}, in this subsection we consider the {\em finite blocklength} regime, which implies $P_{e,i}^{(n)} \mspace{-4mu} > \mspace{-4mu} 0$. 
	
	For independent noises with equal variances, the LQG scheme is a realization of the class of schemes presented in \cite{ASW:14}, which achieves the highest {\em sum-rate} among all linear-feedback schemes.
	Furthermore, for this setting the LQG scheme is also a realization of the class of schemes presented in \cite{Elia:04}.
	In fact, \cite{AMM:12} showed that for this setting, {\em in terms of achievable rates}, LQG strictly outperforms OL, as is demonstrated in Fig. \ref{fig:RateRegionSymm}. 
	Recall that in the OL and in the EOL schemes, the achievable rates are determined by the scheme's steady-state (fixed point) in terms of $\rho_k^2$ (and $\lambda_{i,k}$). In this steady-state, at each channel use the MSE $\alpha_{i,k}$ is attenuated by a {\em constant factor}, which determines the achievable rates, see \cite[Lemma 1]{AMM:12} on the connection between the MSEs and the achievable rates. 
	Similarly, the achievable rates of the LQG scheme are determined by the scheme's steady-state MSE exponents. 
	However, numerical evaluations show that the LQG scheme converges to its steady-state slower than the OL and EOL schemes.
	Based on this observation, \cite{MKDG:14} showed that when the codeword length is finite, the OL scheme can achieve lower MSE compared to the LQG scheme.
	Furthermore, it can be easily observed that if $\mathrm{R}_i^{\text{EOL}} \mspace{-3mu} > \mspace{-3mu} \mathrm{R}_i^{\text{OL}}$, and $\rho_{\text{EOL}} \mspace{-3mu} < \mspace{-3mu} \rho_{\text{OL}}$ (as indicated in Fig. \ref{fig:RateRegionSymm}), then EOL outperforms OL also in the finite blocklength regime.
	
	Let $\beta_{i,n}$ denote the MSE achieved by a decoder of a linear-feedback transmission scheme after $n$ channel uses, and let $\mathrm{R}_i$ be the transmission rate.
	Recall that as the scheme is linear the estimation error is a Gaussian RV \cite[Subsection 10.5]{Kay}.
	Since the data points are selected out of a PAM constellation over $[-0.5,0.5]$, the probability of error can be computed using the standard expression for PAM \cite[pg. 670]{OzarowBC:84}:
	\vspace{-0.3cm}
	\begin{align}
		P_{e,i}^{(n)} = \frac{ 2^{n\mathrm{R}_i} - 1}{2^{n\mathrm{R}_i - 1}} Q \Bigg( \frac{1}{2^{n\mathrm{R}_i + 1} \sqrt{\beta_{i,n}}} \Bigg).
	\end{align}
	
	\vspace{-0.28cm}
	\noindent 
	Let $\mathrm{R}_1^{\text{OL}}(\rho_z)$ denote the achievable rate of the OL scheme at a specific noise correlation $\rho_z$,
	and similarly define $\mathrm{R}_1^{\text{EOL}}(\rho_z)$ and $\mathrm{R}_1^{\text{LQG}}(\rho_z)$. 
	Fig. \ref{fig:Perror_Combine} depicts $P_{e,1}^{(n)}$ vs. $n$ for the OL, EOL and LQG schemes, for $P \mspace{-3mu} = \mspace{-3mu} 2, \sigma_1^2 \mspace{-3mu} = \mspace{-3mu} \sigma_2^2 \mspace{-3mu} = \mspace{-3mu}1$, and $g \mspace{-3mu} = \mspace{-3mu} 1$, for two cases: $\rho_z=0$, and $\rho_z =0.3$. For this setting $\mathrm{R}_1^{\text{OL}}(0) \mspace{-3mu} = \mspace{-3mu} 0.458$, $\mathrm{R}_1^{\text{EOL}}(0) \mspace{-3mu} = \mspace{-3mu} 0.461$, and $\mathrm{R}_1^{\text{LQG}}(0) \mspace{-3mu} = \mspace{-3mu} 0.464$. The transmission rate, for {\em all the schemes}, is set to $\mathrm{R}_1 \mspace{-3mu} = \mspace{-3mu} 0.9 \cdot \mathrm{R}_1^{\text{OL}}(\rho_z), \rho_z \mspace{-3mu} = \mspace{-3mu} 0,0.3$. It can be observed that, for $\rho_z \mspace{-5mu} = \mspace{-5mu} 0$, the EOL scheme achieves $P_{e,1}^{(n)} \mspace{-4mu} = \mspace{-4mu} 10^{-5}$ after $n \mspace{-4mu} = \mspace{-4mu} 18$ channel uses, while the OL and LQG schemes require $n \mspace{-4mu} = \mspace{-4mu} 20$ and $n \mspace{-4mu} = \mspace{-4mu} 56$ channel uses, respectively. 
	It can be further observed that for small $n$ the EOL scheme and the OL scheme achieve similar $P_{e,1}^{(n)}$; however, for larger $n$ the EOL scheme significantly improves upon the OL scheme. 
	These observations also hold when the noises are correlated, as concluded from the curves corresponding to $\rho_z \mspace{-4mu} = \mspace{-4mu} 0.3$ in~Fig.~\ref{fig:Perror_Combine}.
	
	Finally, note that for $\rho_z \mspace{-3mu} = \mspace{-3mu} 0$, a fixed transmission rate $0.9 \mspace{-3mu} \cdot \mspace{-3mu} \mathrm{R}_1^{\text{OL}}(0)$, and $\sigma_1^2 \mspace{-3mu} = \mspace{-3mu} \sigma_2^2 \mspace{-3mu} = \mspace{-3mu}1$, the LQG scheme requires $P \mspace{-3mu} = \mspace{-3mu} 2.8$ in order to achieve $P_{e,1}^{(n)} \mspace{-4mu} = \mspace{-4mu} 10^{-5}$ after $n \mspace{-4mu} = \mspace{-4mu} 18$ channel uses. This reflects an SNR loss of 1.46 dB compared to the EOL scheme. 
	{\em We conclude that in the finite blocklength regime the EOL scheme can significantly improve upon both the OL and the LQG schemes}.
	
	%
	%
		%
		
\vspace{-0.3cm}

\end{document}